\newcommand{\algo}[1]{#1}
\newcommand{\Deltaa}{\ensuremath{\scalebox{.85}{$\Delta$}}}
\newcommand{\prob}{server renting problem\xspace}
\newcommand{\OPT}{\ensuremath{\operatorname{\textsc{Opt}}}\xspace}
\newcommand{\NF}{\ensuremath{\operatorname{\textsc{Nf}}}\xspace}
\newcommand{\MTF}{\ensuremath{\operatorname{\textsc{Mtf}}}\xspace}
\newcommand{\opt}{\ensuremath{\operatorname{\textsc{Opt}}}\xspace}
\newcommand{\longv}[1]{\hspace{-.11cm}}
\newenvironment{myquote}{\list{}{\leftmargin=0.15in \rightmargin=0.15in}\item[]}{\endlist}
\begin{document}

\title{Efficient Online Strategies for Renting \\ Servers in the Cloud}

\author{Shahin Kamali,
Alejandro L\'opez-Ortiz}
\date{}

%\onehalfspacing	
\pagestyle{plain}

\institute{
University of Waterloo, Canada.
}
\maketitle

\begin{abstract}
In Cloud systems, we often deal with jobs that arrive and depart in an online manner. Upon its arrival, a job should be assigned to a server. Each job has a size which defines the amount of resources that it needs. Servers have uniform capacity and, at all times, the total size of jobs assigned to a server should not exceed the capacity. This setting is closely related to the classic bin packing problem. The difference is that, in bin packing, the objective is to minimize the total number of used servers. In the Cloud, however, the charge for each server is  proportional to the length of the time interval it is rented for, and the goal is to minimize the cost involved in renting all used servers. Recently, certain bin packing strategies were considered for renting servers in the Cloud [Li et al. SPAA'14]. There, it is proved that all Any-Fit bin packing strategy has a competitive ratio of at least $\mu$, where $\mu$ is the max/min interval length ratio of jobs. It is also shown that \algo{First Fit} has a competitive ratio of $2\mu + 13$ while \algo{Best Fit} is not competitive at all. We observe that the lower bound of $\mu$ extends to all online algorithms. We also prove that, surprisingly, \algo{Next Fit} algorithm has competitive ratio of at most $2 \mu +1$. We also show that a variant of \algo{Next Fit} achieves a competitive ratio of $K \times max\{1,\mu/(K-1)\}+1$, where $K$ is a parameter of the algorithm. In particular, if the value of $\mu$ is known, the algorithm has a competitive ratio of $\mu+2$; this improves upon the existing upper bound of $\mu+8$. Finally, we introduce a simple algorithm called \algo{Move To Front} (\MTF) which has a competitive ratio of at most $6\mu + 7$ and also promising average-case performance. We experimentally study the average-case performance of different algorithms and observe that the typical behaviour of \MTF is distinctively better than other algorithms.
\end{abstract}

\section{Introduction}
Bin packing is a classic problem in the context of online computation. The input is a sequence of \emph{items} of different \emph{sizes} which appear in a sequential, online manner. The goal is to \emph{place} these items into a minimum number of \emph{bins} of uniform capacity so that the total size of items in each bin is no more than the uniform capacity of the bins. It is often assumed that bins have size 1 and items have a positive size no more than 1. The problem is online in the sense that, upon receiving an item, an algorithm should place it into a bin without any knowledge about the (size of) incoming items. A simple online strategy is \algo{Next Fit} (\NF) in which there is a single \emph{open} bin at each time. If an incoming item fits in the open bin, the algorithm places it there; otherwise, it \emph{closes} the open bin and opens a new bin for the item. Clearly, if we want to minimize the number of bins, % (as is the case for bin packing), 
there is no benefit in closing a bin. \algo{First Fit} is an online algorithm that never closes a bin and places an incoming item in the first bin that has enough space for the item; if such a bin does not exist, it opens a new bin. In First Fit, the bins are maintained in the order that they are opened. \algo{Best Fit} works similarly to \algo{First Fit} except that it maintains bins in the decreasing order of their fill \emph{level}. The level of a bin is the total size of items placed in the bin. 
Note that \algo{First Fit} and \algo{Best Fit} are greedy algorithm in the sense that they avoid opening new bins unless they have to. The algorithms with this property are called Any Fit algorithm. %There is another family of algorithms which are based on placing items of similar sizes in the same bins; we refer to this family as the Harmonic family of algorithms. 

In many cloud systems, a set of \emph{jobs} appear in an online manner that should be assigned to servers. Each job has a \emph{load} which defines the amount of resources that it needs. Depending on the application, the load of a job might be defined through its memory requirement, GPU resource usage, bandwidth usage, or a function of all of them.
In cloud gaming systems, different instances of computer games are created in an online fashion and run in cloud servers while players interact with the servers via thin clients \cite{Huang13,LiTang14}. Here, an instance of a game is a job which, depending on the game and the number of users involved in it, has a load.  In case of computer games, the load of a job is mainly defined through the amount of GPU resources that it demands \cite{LiTang14}. 

With the above definition, any online bin packing algorithm can be used to assign jobs to servers. A job of load $x$ can be treated as an item of size $x$ which is assigned to a server (bin) of certain capacity. In this paper, we interchangeably use the terms `job' and `item' as well as `server' and `bin'. There are, however, distinctions between assigning jobs to servers and the bin packing problem. First, jobs depart the system after they complete; however, the classic bin packing is static in the sense that items are assumed to remain in the bins. When a job arrives, it is not clear when it completes and an algorithm should place it without any knowledge about its departure time.
A more important difference between the two problems is that, in the bin packing problem, the objective is to minimize the number of used bins. In other words, we can think of bins as servers that one can \emph{buy} and we would like to minimize the cost by buying a smaller number of servers. In the cloud, however, we want to \emph{rent} the servers from cloud service providers. For example, gaming companies such as OnLive \cite{onlive} and GaiKai \cite{Gaikai} offer cloud gaming services which are execute in public clouds like Amazon EC2 \cite{AmazonEC2}. A rented server is charged by its usage (often in hourly or monthly basis). So, in order to minimize the cost, we need to minimize the total time that servers are rented. In doing so, an algorithm \emph{releases} a server when all the assigned jobs to it are complete. 

\begin{definition} 
In the \emph{\prob}, a set of jobs (items) appear in an online manner. Each job has a load (size) that defines the amount of resources that it needs. Upon its arrival, a job should be assigned to a server (bin). Servers have uniform capacity and the total load of jobs assigned to a server should not exceed the capacity. Besides the arrival time, each job has a departure time that indicates when it leaves the system. The length of the interval between the arrival and departure time of a job is referred to as the \emph{length} of the job. Upon the arrival of a job, its length %(and departure time) 
is unknown to the online algorithm. To assign a job to a server, an online algorithm might open (rent) a new server or place it to any of the previously opened servers. When all jobs assigned to a server depart, that server is released. The goal is to minimize the total usage time of servers . More precisely, assuming that an algorithm opens $m$ bins $B_1, \ldots, B_m$, the total cost of the algorithm is $\sum_{i=1}^m t_i$, where $t_i$ is the length of time period that $B_i$ has been open for. Without loss of generality, we assume the capacity of servers to be 1 and jobs have size at most 1. Also, we assume the length of jobs to be at least $\Deltaa$ and at most $\mu \Deltaa$ where $\mu \geq 1$.
\end{definition}

When studying the \prob, we are mostly interested algorithms which have good worst-case and average-case performance.  %in providing both worst-case and average-case guarantees. To this end,
For measuring the worst-case performance, we compare an online algorithm with an optimal offline algorithm \OPT that knows the entire sequence (all arrival times, lengths and sizes) in advance. An algorithm is said to be $c$-competitive (more precisely, \emph{asymptotic} $c$-competitive)
%\footnote{In this paper, by \emph{competitive ratio}, we mean \emph{asymptotic} competitive ratio.} 
if the cost of serving any input sequence never exceeds $c$ times the cost of \OPT within an additive constant. %The additive constant is allowed to avoid getting unrealistic results due to start-up costs for short sequences. 
%Beside the worst-case guarantees, it is often useful to consider the average-case performance of online algorithms. This is particularly important in practical scenarios where the typical behaviour of online algorithm is as important as its worst-case behaviour. %The average-case performance of an online algorithm is the ratio between the expected cost of the algorithm and that of \OPT. For the bin packing problem, the expectation is taken over a uniform distribution where the item sizes is uniformly and independently taken as a value between 0 and the capacity of bins. 

\subsection{Previous Work and Contribution}
The Bin Packing problem has been widely studied over the past few decades. It is known that \algo{Next Fit} is 2-competitive while \algo{Best Fit} and \algo{First Fit} are both 1.7-competitive \cite{JoDUGG74}. Generally, any Any Fit algorithm that avoids placing items in the bin with the lowest level is 1.7-competitive (these algorithms are called Almost Any Fit). %The average-case performance of \algo{Next Fit} is $4/3$ \cite{CoHoSY80} while Best Fit and \algo{First Fit} are both optimal in this sense and have a ratio of 1 \cite{BeJLMM84,LeiSho89}. 
The \algo{Harmonic} family of algorithms is another class of bin packing algorithms which are based on placing items of similar sizes together in the same bins. These algorithm generally have better competitive ratios than Any Fit algorithms. The best member of this family is \algo{Harmonic++} with a competitive ratio of 1.5888 \cite{Seid02}. However, these algorithms are rarely used in practice since of their poor average-case performance.
% (the ratio of all these algorithms is strictly more than 1). 
It is known that no online algorithm can be better than 1.54037-competitive \cite{BalBek12}.

%In \cite{CoGaJo83}, 
Coffman et al. \cite{CoGaJo83} studied a dynamic version of the bin packing problem in which items arrive and depart the system. In that variant, it is assumed that the length of any item is revealed upon its arrival. It is proved that the competitive ratio of \algo{First Fit} is between $2.75$ and $2.89$ while no online algorithm can do better than $2.5$ \cite{CoGaJo83}. For the discrete version of the problem, where each item has size $1/k$ for some integer $k$, the competitive ratio of Any Fit algorithms is 3 while no online algorithm can do better than 2.48 \cite{ChLaWo08}. Note that in all these results, the objective is to minimize the number of opened bins.

The online \prob as defined above was recently introduced by Le et al. \cite{LiTang14} (some terms and notations in this paper are also borrowed from \cite{LiTang14}). There, the authors prove that no Any Fit algorithm can be better than $\mu$ competitive. Recall that $\mu$ is the ratio between the length of the largest and the smallest item in the sequence. They also proved that the competitive ratio of \algo{First Fit} is $\frac{k}{k-1} \mu + \frac{6k}{k-1} + 1$ when the size of items are upper bounded by $1/k$ ($k$ can be any positive value). In particular, when $k=1$ (when there is no restriction on item sizes), \algo{First Fit} is $2\mu+13$-competitive. On the other hand, they prove that \algo{Best Fit} is not competitive, i.e., it does not achieve a constant competitive ratio. This result is somewhat surprising as \algo{Best Fit} is usually considered to be the superior algorithm for many bin packing applications. In \cite{LiTang14} a slight modification of the \algo{First Fit} algorithm is introduced which achieves a competitive ratio of $\frac{8}{7} \mu + 55/7$ when the value of $\mu$ is not known to the algorithm and a competitive ratio of $\mu+8$ when the value of $\mu$ is known.

In this paper, we study the \prob and improve some of the results presented in \cite{LiTang14}. We first observe that the lower bound of $\mu$ presented for competitiveness of any Any Fit algorithm can be extended to any online algorithm. The focus of \cite{LiTang14} has been on studying Any Fit algorithms. In the standard bin packing, Any Fit algorithms have an advantage over bounded-space algorithms which close the bins (since there is no harm in keeping bins open). However, we show that this is not the case for the \prob and it makes sense to close some servers to avoid placing new items in the bins opened by older items. In particular, we show that the competitive ratio of the \algo{Next Fit} algorithm is at most $\frac{k}{k-1} \mu + 1$ when items are smaller than $1/k$ and $2\mu + 1$ in the general case. Note that this is much better than the ratio  $2 \mu + 13$ of \algo{First Fit}. We also introduce variants of \algo{Next Fit} which achieve a ratio of $K \times \max \{ 1, \frac{\mu}{K-1}\} + 1$, where $K$ is a parameter of the algorithm and can be any positive value. In particular, if the value of $\mu$ is known, we get an algorithm with competitive ratio of $\mu+2$ which is better than $\mu+8$ of the algorithm presented in \cite{LiTang14}. 

Although \algo{Next Fit} has a superior competitive ratio compared to \algo{Best Fit} and \algo{First Fit}, unfortunately, it has a poor average-case performance relative to these algorithms. Our experiments indicate that, for sequences generated uniformly at random, \algo{Best Fit} performs generally better than the other two algorithms. To address this issue, we introduce a simple Any Fit algorithm called \algo{Move To Front} (\MTF) which outperforms \algo{Best Fit} and other algorithms on random sequences. Moreover, in contrast to \algo{Best Fit}, \MTF is competitive and has a competitive ratio of at most $6 \mu + 7$.

%We introduce other bounded-space algorithms to improve competitive ratio and average-case ratio. In doing so, we introduce a variant of bounded space Best Fit algorithm that has a competitive ratio of $1.7\mu + 2$. We also introduce another algorithm that is $2\mu+1$-competitive while having an average-case performance ratio of $\mu+3$. All of the presented algorithms are simpler and faster than \algo{First Fit} and are expected to be applicable in the real cloud services.

\section{Preliminaries}
In this section, we present some basic results about the \prob. First, we show that any online algorithm for the \prob has a competitive ratio of at least $\mu$. Our lower bound sequence is similar to that of \cite{LiTang14} and is composed of items with uniform sizes and different lengths.

\begin{theorem}\label{lower}
The competitive ratio of any online algorithm for the \prob is at least $\frac{\mu}{1+ \epsilon (\mu-1)}$ where $\epsilon$ is a lower bound for the size of items. % and $\mu$ is the max/min ratio between the length of items.
\end{theorem}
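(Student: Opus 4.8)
The plan is to exhibit a single adaptive adversarial instance built entirely from identical items, exploiting the fact that an online algorithm must commit to each item's placement before learning its length. At time $0$ I would release $N$ items, each of the minimum size $\epsilon$, with $N$ large; for clarity I take $1/\epsilon$ to be integral, so a server holds exactly $1/\epsilon$ such items, and I push the resulting rounding into the asymptotic limit. Because all items share the same size and arrival time and their lengths are hidden, a deterministic algorithm's way of distributing them among servers is predetermined. Let $s$ denote the number of servers it ends up using, and note $s \geq N\epsilon$ since each server holds at most $1/\epsilon$ items of size $\epsilon$.

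Reading off the algorithm's placement, the adversary then fixes the lengths: in each of the $s$ occupied servers it designates exactly one item as \emph{long} (length $\mu\Deltaa$) and declares every remaining item \emph{short} (length $\Deltaa$), with departures revealed only as they occur. Each occupied server now contains a long item, so it must stay open on all of $[0,\mu\Deltaa]$; hence the algorithm pays at least $s\mu\Deltaa$. There are exactly $s$ long items and $N-s$ short items, and \OPT, knowing the lengths, can segregate them: it packs the long items into $s\epsilon$ servers kept open on $[0,\mu\Deltaa]$ and the short items into $(N-s)\epsilon$ servers released at time $\Deltaa$. Arguing instead by time slices gives the same total: \OPT needs $N\epsilon$ servers on $[0,\Deltaa]$ (all items coexist) and only $s\epsilon$ servers on $(\Deltaa,\mu\Deltaa]$, for cost $\epsilon\Deltaa\bigl(N+s(\mu-1)\bigr)$; since nothing arrives after time $0$, no server reuse can help \OPT do better.

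Finally I would optimize the resulting ratio
\[
\frac{s\mu\Deltaa}{\epsilon\Deltaa\bigl(N+s(\mu-1)\bigr)}=\frac{\mu}{\epsilon}\cdot\frac{s}{N+s(\mu-1)}
\]
over the algorithm's only free parameter, namely $s$. The factor $s/\bigl(N+s(\mu-1)\bigr)$ is increasing in $s$, so the algorithm's best response is the smallest feasible value $s=N\epsilon$, i.e.\ the most compact packing; substituting yields exactly $\mu/\bigl(1+\epsilon(\mu-1)\bigr)$, and any sparser packing only increases the ratio. Letting $N\to\infty$ absorbs the additive constant of the asymptotic definition and establishes the bound for every online algorithm.

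The step I expect to require the most care is the reduction of an arbitrary online strategy to the single quantity $s$: one must argue that identical arrivals at a common time, with lengths hidden, leave the algorithm no leverage beyond how many servers it opens, so that the adaptive choice of departures is legitimate against \emph{all} online algorithms rather than only greedy or Any-Fit ones. The monotonicity in $s$ together with the feasibility bound $s\geq N\epsilon$ then does the rest, and confirming that \OPT's cost cannot be driven below $\epsilon\Deltaa\bigl(N+s(\mu-1)\bigr)$ is the remaining routine check.
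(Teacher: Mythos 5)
Your proof is correct and follows essentially the same construction as the paper: identical items of the minimum size $\epsilon$, an adversary that strands one long item (length $\mu\Deltaa$) in each server the algorithm has opened while all other items depart at time $\Deltaa$, and an offline solution that consolidates the long items into few servers while packing the short ones tightly for the short duration. The only cosmetic differences are that the paper repeats the construction over phases (rather than taking $N\to\infty$ in a single batch) and fixes the adversary to keep exactly $1/\epsilon$ long items regardless of how many servers the algorithm opened, which sidesteps your explicit monotonicity-in-$s$ optimization.
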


\begin{proof}
Recall that the lengths of all items are at least $\Deltaa$ and at most $\mu \Deltaa$. Consider a sequence which is defined through phases. Each phase starts with $\frac{1}{\epsilon^2}$ items of size $\epsilon$. To place these items, any algorithm has to open at least $1/\epsilon$ bins. At time $\Deltaa$, $\frac{1}{\epsilon^2}-\frac{1}{\epsilon}$ items depart in an adversarial manner so that there is a single of item of size $\epsilon$ in $1/\epsilon$ bins (some bins might get released at this time). The remaining items stay for a period of length $\mu \Deltaa$ and the online algorithm keeps a single bin for each of them. At time $\mu \Deltaa$, all items depart and the phase ends. The cost of the online algorithm for each phase is at least $\mu  \Deltaa / \epsilon$ (it keeps $1/\epsilon$ bins for a period of $\mu \Deltaa$). \OPT places items which have length $\mu  \Deltaa$ together in a single bin and incurs a cost of $\mu  \Deltaa$ for them. Other $\frac{1}{\epsilon^2}-\frac{1}{\epsilon}$ items are placed tightly together in $1/\epsilon - 1$ bins for a period of length $ \Deltaa$ (after which all they leave and their bins get closed. The cost of \OPT for these items will be $ \Deltaa/\epsilon -  \Deltaa$. In total, the cost of \OPT will be $\mu  \Deltaa +  \Deltaa/\epsilon - \Deltaa$ and the competitive ratio of the algorithm will be $ \frac{\mu \Deltaa/\epsilon}{\mu  \Deltaa +  \Deltaa/\epsilon -  \Deltaa} = \frac{\mu}{1+ \epsilon (\mu-1)}$. \qed
\end{proof}

Next, we introduce two lower bounds for the cost of $\OPT$ for serving any input sequence. We say an item $x$ is \emph{active} at time $t$ if $t$ lies in the interval between the arrival and the departure time of $a$. Let the \emph{span} of an input sequence $\sigma$ denote the length of time in which at least one item in $\sigma$ is active. Clearly, the cost of any algorithm (including the offline \OPT) for serving $\sigma$ is at least equal to the span of $\sigma$. %Intuitively, if we relax the assumption on item sizes and assume all items have small size $\epsilon$, the cost of \OPT becomes equal to the span of the sequence. We can do the same for item lengths and assume all items appear at the same time. 
Define the \emph{resource utilization} of an item as the product of its size and its length. This way, the cost of any algorithm for $\sigma$ is at least equal to $util(\sigma)$, that is, the total resource utilization of items in $\sigma$. % (this happens when all items are tightly packed together at all times so that no space is wasted in the bins). 
So, the cost of an optimal algorithm for an input sequence is always lower bounded by the span of the sequence and also by the total utilization of the sequence.

\begin{proposition}\label{prp1}
For any input sequence $\sigma$, the cost of an optimal offline algorithm $\OPT$ is at least equal to $span(\sigma)$ and $util(\sigma)$, namely, the span of $\sigma$ and also the total resource utilization of items in $\sigma$.
\end{proposition}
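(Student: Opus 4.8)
The plan is to establish the two inequalities separately, both via the same charging scheme: integrate a pointwise lower bound on the number of servers that \OPT keeps open over time. Let $m(t)$ denote the number of servers open in \OPT's solution at time $t$. Since \OPT releases a server only once all its jobs have departed, the total cost of \OPT equals $\int m(t)\,dt$, where the integral ranges over the whole timeline (and $m(t)=0$ outside the span).

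First I would treat the span bound. By definition of $span(\sigma)$, at every time $t$ lying in the span at least one item is active, and any active item must reside in some open server, so $m(t)\geq 1$ throughout the span. Integrating this pointwise inequality over the span yields that the cost of \OPT is at least $span(\sigma)$.

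For the utilization bound, the key is a pointwise capacity constraint. Let $s(t)$ denote the total size of the items active at time $t$. Because each server has capacity $1$ and all active items are housed in open servers, the open servers at time $t$ must jointly hold total size $s(t)$, which forces $m(t)\geq \lceil s(t)\rceil \geq s(t)$. Integrating gives that the cost of \OPT is at least $\int s(t)\,dt$. It then remains to identify this integral with $util(\sigma)$: writing $s(t)=\sum_{x} size(x)\cdot \mathbf{1}[x \text{ active at } t]$ and exchanging the finite sum with the integral, each item $x$ contributes $size(x)$ times the length of its active interval, i.e.\ exactly $size(x)\cdot length(x)$. Summing over all items recovers $util(\sigma)$, which completes the argument.

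Both bounds reduce to integrating an elementary pointwise estimate, so I do not anticipate any genuine obstacle. The only point meriting care is the interchange of sum and integral in the utilization bound, which is legitimate since at each time $t$ only finitely many items are active.
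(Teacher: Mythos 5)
Your proof is correct and matches the paper's own (informal) justification: the paper asserts both bounds in the text preceding the proposition, arguing that at least one server must be open whenever an item is active, and that the rented capacity must cover the utilization of the items. Your integration of $m(t)$ over time, with the pointwise bounds $m(t)\geq 1$ on the span and $m(t)\geq s(t)$ from the capacity constraint, is simply the careful formalization of that same argument.
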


When we allow arbitrary small items, Theorem \ref{lower} indicates that all algorithms have a competitive ratio of at least $\mu$. This suggests that when item sizes are larger than a fixed value, better competitive ratios can be achieved. Consider a sequence $\sigma$ in which all item sizes are larger than $1/k$ for some positive value $k$. The cost of any algorithm is at most equal to the total length of all items denoted by $L(\sigma)$ (which happens when no two items share a bin). On the other hand, %no more than $k$ items can share a bin at the same time. Thus, 
the total resource utilization of items, and consequently cost of \opt, is at least $L(\sigma)/k$. So, we get the following.

\begin{proposition}\cite{LiTang14}\label{pipipi2}
When items sizes are lower bounded by $1/k$ ($k$ is a positive value), the competitive ratio of any online algorithm for the \prob is at most equal to $k$.
\end{proposition}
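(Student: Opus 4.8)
The plan is to sandwich the cost of an arbitrary online algorithm $\alg$ between two quantities that differ by only a factor of $k$: an easy \emph{upper} bound on $\alg$ in terms of the total length of the items, and the \emph{lower} bound on $\opt$ in terms of total utilization already supplied by Proposition~\ref{prp1}.

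First I would establish that the cost of \emph{any} algorithm (online or offline) on a sequence $\sigma$ is at most $L(\sigma)$, the sum of the lengths of all items in $\sigma$. The reason is that a bin contributes to the cost only while it is open, and it is open exactly during the union of the active intervals of the items assigned to it. Since the length of a union of intervals never exceeds the sum of the individual lengths, the total open time of a bin is at most the combined length of the items it holds. Summing over all bins opened by the algorithm, and using that every item is assigned to exactly one bin, gives $\alg(\sigma) \le L(\sigma)$.

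Next I would lower-bound $\opt(\sigma)$. By Proposition~\ref{prp1} we have $\opt(\sigma) \ge util(\sigma)$, where $util(\sigma)=\sum_i s_i \ell_i$ is taken over items of size $s_i$ and length $\ell_i$. Using the hypothesis that every item size satisfies $s_i \ge 1/k$, each term is at least $\ell_i/k$, so $util(\sigma) \ge \frac{1}{k}\sum_i \ell_i = L(\sigma)/k$, whence $\opt(\sigma) \ge L(\sigma)/k$.

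Combining the two bounds yields $\alg(\sigma) \le L(\sigma) \le k \cdot util(\sigma) \le k \cdot \opt(\sigma)$, which is exactly the claimed ratio of $k$. There is no serious obstacle here: the only point requiring care is the first step, namely the observation that placing several items in one bin can only reduce an algorithm's cost relative to giving each item its own bin, so that $L(\sigma)$ is a valid \emph{universal} upper bound independent of the algorithm's placement decisions; everything else follows immediately from Proposition~\ref{prp1} and the size lower bound.
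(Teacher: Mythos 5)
Your proposal is correct and follows essentially the same route as the paper: bound the online cost above by the total length $L(\sigma)$ of all items (the worst case being one bin per item), and bound $\opt(\sigma)$ below by $util(\sigma) \ge L(\sigma)/k$ via Proposition~\ref{prp1} and the size lower bound. Your justification of the upper bound (open time of a bin is the union of its items' intervals, hence at most their summed lengths) is just a more careful spelling-out of the paper's parenthetical remark.
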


\section{\algo{Next Fit} Algorithm}

In this section, we analyze the \algo{Next Fit} algorithm for the \prob. Recall that for the bin packing problem \algo{Next Fit} keeps exactly one bin open at any given time. If an incoming item does not fit in the open bin, it closes the bin and opens a new bin. For the \prob, we distinguish between \emph{closing} and \emph{releasing} a bin. %As before, the algorithm keeps one open bin. 
When an item does not fit in the open bin, the algorithm closes the bin and does not refer to it. Such a bin remains in the system (i.e., is being rented) until all items which are placed there depart and it becomes released. 

\begin{example}
Consider sequence $\left\langle a=(0.3,1,5), b=(0.4,2,6), c=(0.4,3,5), \ldots  \right\rangle$ of items. The first element of each tuple indicates the size and the second and third respectively indicate arrival and departure times of an item. At time 1, item $a$ arrives and is placed in the single open bin. At time 2, item $b$ arrives and is placed in the same bin (the level of the bin will be 0.7). At time 3, item $c$ arrives which does not fit in the open bin; hence, the current open bin is closed and a new bin is opened for $c$. The closed bin remains in the system (and a rental cost is paid for it) until time 6 where item $b$ departs and the bin gets released.
\end{example}

\begin{theorem}\label{NFTH}
The competitive ratio of \algo{Next Fit} for the \prob is at most $\frac{\mu}{1-1/k} + 1$ for serving sequences in which item sizes are no more than $1/k$. If $k<2$, the ratio is at most $2\mu + 1$.
\end{theorem}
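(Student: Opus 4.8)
The plan is to bound the cost of \NF\ using the two lower bounds on \OPT\ from Proposition~\ref{prp1}, namely $span(\sigma)$ and $util(\sigma)$. I would label the bins $B_1,\dots,B_m$ in the order \NF\ opens them, and let $o_i$ be the time $B_i$ is opened, $c_i=o_{i+1}$ the time it is closed (for $i<m$), and $r_i$ the time it is released, i.e.\ when its last item departs. The cost of \NF\ is $\sum_{i=1}^m (r_i-o_i)$. First I would split each term as $r_i-o_i=(o_{i+1}-o_i)+(r_i-o_{i+1})$ for $i<m$, so that the telescoping part sums to $r_m-o_1$, which is at most $span(\sigma)\le\OPT$ since only one bin is open at a time. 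This isolates the ``overhang'' $r_i-c_i$, the time $B_i$ is still being rented after it is closed, as the only quantity left to control.

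For the overhang, I would observe that every item of $B_i$ arrives no later than $c_i$ and has length at most $\mu\Deltaa$, so the last item to leave $B_i$ departs at most $\mu\Deltaa$ after $c_i$; hence $r_i-c_i\le\mu\Deltaa$ for each $i$. This reduces the whole problem to bounding the number of opened bins, giving that the cost of \NF\ is at most $span(\sigma)+(m-1)\mu\Deltaa$.

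To bound $m$, I would charge $m-1$ against $util(\sigma)$ using the overflow condition that closes each bin. When $B_i$ is closed, the arriving item does not fit, so the total size of items active in $B_i$ at time $c_i$ exceeds $1-1/k$ whenever all sizes are at most $1/k$; since each such item has length at least $\Deltaa$ and items in distinct bins are distinct, summing gives $(m-1)(1-1/k)\Deltaa\le util(\sigma)\le\OPT$. Combining with the overhang bound yields the ratio $\frac{\mu}{1-1/k}+1$. For the general case ($k<2$, arbitrarily large items) this level bound degrades, so I would instead pair consecutive bins: the item that closes $B_i$ becomes the first item of $B_{i+1}$, and its size plus the level of $B_i$ at closing exceeds $1$. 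Bounding the two families separately (the levels of the $B_i$, and the first items of the $B_{i+1}$) by $util(\sigma)$ gives $(m-1)\Deltaa<2\,util(\sigma)\le 2\OPT$, hence the ratio $2\mu+1$.

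The routine parts are the telescoping and the length estimate; the step needing the most care is the charging argument for $m$. The main obstacle is ensuring that utilization is not double counted: the items realizing the level of $B_i$ and the triggering first item of $B_{i+1}$ can coincide, so in the general case I would bound the two sums against $util(\sigma)$ independently rather than treating them as one disjoint family, which is exactly what produces the factor of $2$ in the general bound.
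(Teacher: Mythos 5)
Your proposal follows the paper's strategy almost exactly: split each bin's rental time into a pre-closing part (summing to at most $span(\sigma)\le\OPT(\sigma)$) plus an overhang of at most $\mu\Deltaa$, then bound the number of overhangs by charging the level of a bin at its closing time against $util(\sigma)\le\OPT(\sigma)$, with the pairing/amortization trick producing the factor $2$ in the general case. However, two of your intermediate claims are false as written, and both failures have the same root: you implicitly assume that every bin $B_i$ with $i<m$ is closed by an arriving item that does not fit. Under \NF for this problem a bin can instead become empty and be \emph{released} before any new item arrives, so it is never closed by an overflow event. Concretely, take $n$ items of size $\epsilon\le 1/k$ and length $\Deltaa$, the $j$-th arriving at time $2(j-1)\Deltaa$: each item gets a fresh bin (the previous one has already been released), and no bin is ever closed by an overflowing item. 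Here (i) your telescoping claim $r_m-o_1\le span(\sigma)$ fails, since $r_m-o_1=(2n-1)\Deltaa$ while $span(\sigma)=n\Deltaa$ (the differences $o_{i+1}-o_i$ include dead time after $B_i$ is released); and (ii) your charging claim $(m-1)(1-1/k)\Deltaa\le util(\sigma)$ fails, since the left side is $(n-1)(1-1/k)\Deltaa$ while $util(\sigma)=n\epsilon\Deltaa$, which is smaller for small $\epsilon$. Note that \NF itself is fine on this input (its cost equals the span); it is the proof's inequalities that break.

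The repair is precisely the distinction the paper makes explicit: call a bin \emph{critical} if it is closed by an overflowing item before it is released, and let $p$ be the number of critical bins. Only critical bins have positive overhang, so $\NF(\sigma)\le span(\sigma)+p\,\mu\Deltaa$; for the span term you should sum the pairwise disjoint pre-closing (or pre-release, for non-critical bins) intervals directly --- each lies inside the span because a rented bin always contains an active item --- rather than telescope over opening times, which silently counts the gaps. Your level argument and your two-family pairing argument then apply verbatim, but only to critical bins, giving $p(1-1/k)\Deltaa\le util(\sigma)$ in the small-item case and $p\Deltaa< 2\,util(\sigma)$ in general, after which your final combinations yield exactly the claimed ratios. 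With this modification your proof coincides with the paper's.
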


\begin{proof}
Consider an arbitrary sequence $\sigma$ and assume \algo{Next Fit} opens $m$ bins $B_1, \ldots, B_m$ for serving $\sigma$. Let $st_i$ denote the length of the time interval at which the server $B_i$ has been rented, i.e., the time after it is opened and before it is released. We refer to this period as \emph{stretch} of $B_i$. Let $\NF(\sigma)$ denote the cost of \algo{Next Fit} for serving $\sigma$; we have $\NF(\sigma) = \sum_1^m st_i$. The stretch of $B_i$ can be partitioned into two period. First, the interval between its opening time and when \algo{Next Fit} closes the bin. %(, i.e., does not place new items in it.  
The second period is the time between the bin gets closed and when it gets released. Let $st^1_i$ and $st^2_i$ denote the lengths of first and second period of $B_i$ ($st^1_i + st^2_i = st_i$). If a bin gets closed before being released, the second period will be empty ($st^2_i = 0)$ (see Figure \ref{fig:intervals}).

Let $p \leq m$ denote the number of bins which are closed before being revealed. We call these bins \emph{critical} bins and for them we have $st^2_i \neq 0$.
The main observation is that, when a bin gets closed, it takes a time of length at most $\mu$ before it gets released, i.e., the second period of each bin has a length of at most $\mu$ ($st^2_i \leq \mu$ for all $i$). This is because no new item is placed in the bin in the second period. So, the total rental time for the second period of all bins is no more than $p \times \mu$. %, i.e., $\sum\limits_1^m st^2_i \leq m \times \mu $. 
On the other hand, the total rental time of the first periods of all bins is no more than the span of input sequence. This is because the first period of a bin starts when that of previous bin is finished (no two bins are in their first stretch period at the same time). So we have 

\begin{equation}
\NF(\sigma) = \sum\limits_{i=1}^m st_i = \sum\limits_{i=1}^m st^1_i + \sum\limits_{i=1}^p st^2_i \leq span(\sigma) + p \times \mu \Deltaa 
\label{equa1}
\end{equation}

Assume that all items in $\sigma$ are smaller or equal to $1/k$ for $k\geq 2$. At the time of being closed, all critical bins have a level of at least $1-1/k$ (otherwise the item that caused opening of a new bin could fit in such a bin). This implies that the number of critical bins (i.e., $p$) cannot be more than $\frac{\omega(\sigma)}{1-1/k}$ %, i.e., $p \leq \frac{\omega(\sigma)}{1-1/k}$. 
where $\omega(\sigma)$ is the total size of items in $\sigma$. Let $util(\sigma)$ denote the total resource utilization of items in $\sigma$. Since the length of each item is at least $\Deltaa$, we have $util(\sigma) \geq \omega(\sigma) \times \Deltaa$, and by Proposition \ref{prp1} $\omega(\sigma) \leq \OPT(\sigma)/\Deltaa$. Consequently, $p \leq \frac{\opt(\sigma)}{(1-1/k)  \Deltaa}$. Also, by Proposition \ref{prp1}, $span(\sigma) \leq \OPT(\sigma) $. Plugging these into Equation \ref{equa1}, we get the following inequality which completes the proof. % a lower bound for the total resource utilization of items in $\sigma$ (sine each items has length 1 or more). Hence, by Proposition \ref{prp1}, the total  we have $\OPT(\sigma) \geq W(\sigma)$ and $\OPT(\sigma) \geq Span(\sigma)$. Plugging these into the above equation, we get 
$$\NF(\sigma) \leq \OPT(\sigma) + \frac{\OPT(\sigma)}{1-1/k} \times \mu  $$

Next, assume $k \leq 2$. We define the \emph{amortized level} of a critical bin $B$ as the sum of the size of the item that closes $B$ and the total size of items in $B$ (at the time that Next Fit closes it). By definition of \NF, the amortized level of all critical bins is more than 1. At the same time, the size of each item is added at most twice in the total amortized cost (once as a part of a critical bin and once as the time that closes a critical bin). Hence, the total sum of the amortized levels of all critical bins is at most twice the total size of sequence. This implies that the number of critical bins is no more than twice the total size of items in $\sigma$, i.e., $p \leq 2 \omega(\sigma) \leq 2 \opt(\sigma)/\Deltaa$. Applying this into Equation \ref{equa1}, we get the following inequality which completes the proof. $$NF(\sigma) \leq \OPT(\sigma) + 2 \OPT(\sigma)  \times \mu  $$ \qed
\end{proof}

\begin{figure}[!t]
\centering
\includegraphics[width=0.75\columnwidth, trim = 0mm 266mm 106mm 0mm, clip]{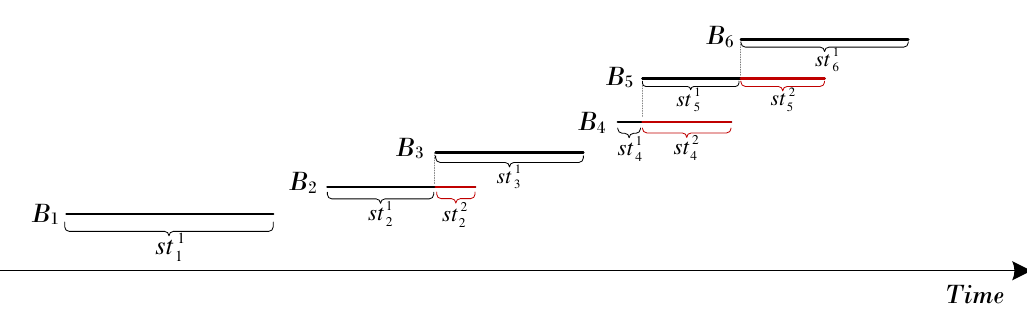}
\caption{The stretch of bins in a packing of the \algo{Next Fit} algorithm. In this example, bins $B_2, B_4,$ and $B_5$ are critical bins. The second periods of bins are highlighted in red. Note that the total lengths of the first periods defines the span of the input sequence. }
\label{fig:intervals}
\end{figure}

%The following lemma shows that the upper bound given by the above theorem is almost tight:

%\begin{lemma}
%The competitve ratio of Next Fit algorithm for server renting problem is at least $\frac{\mu}{1-1/k}$ for serving any sequece in which item sizes are no more than $1/k$. 
%\end{lemma}

\subsection{Improving the Competitive Ratio: \algo{Modified Next Fit} Algorithm}\label{tward}
In this section, we modify the \algo{Next Fit} algorithm to improve its competitive ratio. 
Intuitively speaking, the competitive ratio of \algo{Next Fit} improves for sequences formed by small items. On the other hand, as Proposition \ref{pipipi2} implies, when all items are large, the competitive ratio is independent of $\mu$. This suggest that the competitive ratio might get improved when large and small items are treated separately. A similar approach is used in \cite{LiTang14} to improve the competitive ratio of \algo{First Fit}.\\
%. %Here we apply similar strategy for the \algo{Next Fit} algorithm. The result is closely related to the \algo{Harmonic} class of algorithms.
%
%\begin{lemma}
%Consider the \algo{Next Fit} algorithm on a sequence $\sigma$ in which all items are larger than or equal to $1/K$ for some positive value $K$. The 
%\end{lemma}

\begin{myquote}
\emph{\algo{Modified Next Fit:}} The algorithm has a parameter $K\geq 2$ and treats items smaller than $1/K$ using the \algo{Next Fit} strategy. Items larger or equal to $1/K$ are treated separately also using the \algo{Next Fit} strategy. 
\end{myquote}

%The proof of the following theorem can be found in the Appendix.

\begin{theorem}
The competitive ratio of \algo{Modified Next Fit} with parameter $K$ is at most $K \times max\{1,\mu/(K-1)\}+1$.
\end{theorem}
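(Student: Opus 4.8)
The plan is to exploit the fact that \algo{Modified Next Fit} runs two independent copies of \algo{Next Fit}: one on the subsequence $\sigma_S$ of \emph{small} items (those of size $<1/K$) and one on the subsequence $\sigma_L$ of \emph{large} items (those of size $\ge 1/K$). Since the two instances never share a bin, the cost of \algo{Modified Next Fit}, which I write $\text{MNF}(\sigma)$, splits cleanly as $\text{MNF}(\sigma)=\NF(\sigma_S)+\NF(\sigma_L)$, and I would bound the two terms separately, charging each to a \emph{different} lower bound on $\OPT(\sigma)$ provided by Proposition \ref{prp1}.

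For the small items I would reuse the machinery of Theorem \ref{NFTH} with $k=K$. Equation \ref{equa1} gives $\NF(\sigma_S)\le span(\sigma_S)+p_S\,\mu\Deltaa$, where $p_S$ is the number of critical bins, each of level more than $1-1/K$ at the moment it is closed (otherwise the small item that closed it, of size $<1/K$, would have fit). Hence $p_S\le \omega(\sigma_S)/(1-1/K)$. The one change from the earlier proof is that, instead of passing to $\OPT$ directly, I would keep the estimate in terms of utilization: since every item has length at least $\Deltaa$, we have $\omega(\sigma_S)\,\Deltaa\le util(\sigma_S)$, so $p_S\,\mu\Deltaa\le \frac{K\mu}{K-1}\,util(\sigma_S)$ and therefore $\NF(\sigma_S)\le span(\sigma_S)+\frac{K\mu}{K-1}\,util(\sigma_S)$.

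For the large items I would use the reasoning behind Proposition \ref{pipipi2}. The cost of any algorithm is at most the total length $L(\sigma_L)$ of the items it serves (the setup of Proposition \ref{pipipi2}), so $\NF(\sigma_L)\le L(\sigma_L)$; and because each large item has size at least $1/K$, its resource utilization is at least $1/K$ times its length, giving $util(\sigma_L)\ge L(\sigma_L)/K$ and hence $\NF(\sigma_L)\le K\,util(\sigma_L)$. The point of using a purely utilization-based bound here, rather than a span-based one, is that it avoids charging $span$ a second time.

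Finally I would add the two estimates and collect terms:
\[
\text{MNF}(\sigma)\le span(\sigma_S)+\tfrac{K\mu}{K-1}\,util(\sigma_S)+K\,util(\sigma_L).
\]
Using $\tfrac{K\mu}{K-1}\le K\max\{1,\mu/(K-1)\}$ and $K\le K\max\{1,\mu/(K-1)\}$, the two utilization terms merge into $K\max\{1,\mu/(K-1)\}\,(util(\sigma_S)+util(\sigma_L))=K\max\{1,\mu/(K-1)\}\,util(\sigma)$. Proposition \ref{prp1} then gives $util(\sigma)\le\OPT(\sigma)$ and $span(\sigma_S)\le span(\sigma)\le\OPT(\sigma)$, yielding $\text{MNF}(\sigma)\le \big(K\max\{1,\mu/(K-1)\}+1\big)\OPT(\sigma)$, as claimed. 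The main thing to get right is precisely this bookkeeping: the $+1$ must come from a single appearance of $span\le\OPT$, which forces the large-item part to be charged entirely to utilization, while feeding the small-item critical-bin count into $util(\sigma_S)$ instead of $\OPT$ is what lets the common factor $\max\{1,\mu/(K-1)\}$ emerge and absorb both utilization coefficients.
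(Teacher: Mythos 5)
Your proposal is correct and follows essentially the same route as the paper's own proof: split $\sigma$ into small and large items, bound the small-item cost via the critical-bin machinery of Theorem \ref{NFTH} (charged to $span$ plus $\frac{K\mu}{K-1}\,util(\sigma^s)$), bound the large-item cost by $K\,util(\sigma^l)$ via the reasoning of Proposition \ref{pipipi2}, and merge the utilization terms under the common factor $K\max\{1,\mu/(K-1)\}$ before invoking Proposition \ref{prp1}. Your write-up is in fact slightly more explicit than the paper's about why the large-item cost is at most $K\,util(\sigma^l)$ and about keeping $span$ to a single appearance, but the decomposition and key estimates are identical.
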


\begin{proof}
Consider a sequence $\sigma$ and let $\sigma^s$ and $\sigma^l$ denote the subsequences of $\sigma$ respectively formed by items smaller and larger or equal to $K$. Recall that the resource utilization of an item is the product of its length and its size, and the total resource utilization of all items in a sequence is a lower bound for the cost of \opt for that sequence.
 As Proposition \ref{pipipi2} suggests, the number of opened bin by Modified Next Fit for items in $\sigma^l$ is no more than %$K \times W^l$ where $W^l$ is the total size of items in $\sigma^l$ (thsi is becuase the level of all bins is at least $1/K$). So, the algorithm incurs a cost of at most $\mu \times K W^l$ 
$k \times util(\sigma^l)$, where $util(\sigma^l)$ is the total utilization of items in $\sigma^l$. 

For placing $\sigma^s$, as the proof of Theorem \ref{NFTH} suggests, the algorithm incurs a cost of at most $\mu \Delta \times  \frac{\omega(\sigma^s)}{1-1/K}   + span(\sigma^s) $, where $\omega(\sigma^s)$ is the total size of item in $\sigma^s$. This will be no more than  $\mu \Delta \times \frac{util(\sigma^s)}{\Delta(1-1/K)}   + span(\sigma^s) $, where $util(\sigma^s)$ is the total resource utilization of $\sigma^s$ (this is because the length of all items is at least $\Delta$).  In total, the cost of the algorithm will be at most $K \times util(\sigma^l) + \mu \times \frac{util(\sigma^s)}{1-1/K}   + span(\sigma^s)$. This is no more than $K \times util(\sigma) \times \max \{1,\frac{\mu}{K-1}\} + span(\sigma)$ where $util(\sigma)$ is the total utilization of items in $\sigma$. 
Since $util(\sigma)$ and $span(\sigma)$ are lower bounds for the cost of \opt, we can conclude the cost of Modified Next Fit is at most $K \times \max\{1,\frac{\mu}{K-1}\} + 1 $. \qed
\end{proof}

When the value of $\mu$ is known to the algorithm, we can define $K$ to be $\mu+1$. In this case, the competitive ratio of \algo{Modified First Fit} will be $\mu+2$.

\begin{proposition}
When the value of $\mu$ is known, there is an online algorithm with competitive ratio of $\mu+2$.

\end{proposition}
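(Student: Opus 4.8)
The plan is to derive this as a direct instantiation of the preceding theorem on \algo{Modified Next Fit}, whose competitive ratio is $K \times \max\{1, \mu/(K-1)\} + 1$ for any admissible parameter $K \geq 2$. Since $\mu$ is assumed known, the algorithm can fix $K$ as a function of $\mu$ before processing any item, so the task reduces to choosing the value of $K$ that minimizes this bound.

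First I would analyze the expression $f(K) = K \times \max\{1, \mu/(K-1)\}$ as a function of $K$. It splits into two regimes according to the sign of $\mu/(K-1) - 1$. When $K - 1 \geq \mu$ (that is, $K \geq \mu + 1$) the maximum equals $1$, so $f(K) = K$, which is increasing and hence minimized over this range at its left endpoint $K = \mu + 1$. When $K - 1 < \mu$ (that is, $K < \mu + 1$) the maximum equals $\mu/(K-1)$, so $f(K) = \mu \cdot K/(K-1) = \mu(1 + 1/(K-1))$, which is decreasing in $K$ and thus approaches its infimum as $K \to \mu+1$. Both branches therefore drive $K$ toward the crossover point $K = \mu + 1$, where they agree at the common value $f(\mu+1) = \mu + 1$; this confirms that $K = \mu+1$ is the optimal balancing choice rather than merely a convenient one.

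Next I would set $K = \mu + 1$ and verify the substitution directly: $\max\{1, \mu/(K-1)\} = \max\{1, \mu/\mu\} = 1$, so the competitive ratio becomes $(\mu+1)\cdot 1 + 1 = \mu + 2$, as claimed. The one admissibility condition to check is that $K = \mu + 1 \geq 2$, which holds because $\mu \geq 1$ by assumption; hence \algo{Modified Next Fit} with parameter $\mu+1$ is a legitimate algorithm and witnesses the stated bound.

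The argument is essentially a one-line corollary of the theorem, so there is no substantial obstacle. The only point requiring genuine care is the two-regime optimization above, which guarantees that no smaller ratio is obtainable from this family of algorithms by a different choice of $K$, and thereby justifies the specific value $K = \mu+1$.
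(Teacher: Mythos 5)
Your proposal is correct and follows essentially the same route as the paper, which simply instantiates \algo{Modified Next Fit} with parameter $K = \mu+1$ to obtain $(\mu+1)\cdot 1 + 1 = \mu+2$. The extra two-regime optimization you include (showing $K=\mu+1$ is in fact the minimizer of $K \times \max\{1,\mu/(K-1)\}+1$) is a harmless addition beyond what the paper states, but the core argument is identical.
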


\section{Toward Practical Algorithms: \algo{Move To Front} Algorithm}
In the previous sections, we showed that \algo{Next Fit} and a variant of that have promising competitive ratios. These kinds of worst-case guarantees are important in theoretical analysis of the problem. Nevertheless, in practice, beside worst-case guarantees, we are interested in algorithms that also have promising average-case performance. For example, in the case of the classic bin packing problem, \algo{Best Fit} and \algo{First Fit} are preferred over other algorithms in most applications because they have acceptable worst-case performance (although not as good as the \algo{Harmonic} family of algorithms) and superior average-case performance.
We examined different packing algorithms to evaluate their average-case performance. Our experiments are presented in Section \ref{sect:exp} and show that, on average, \algo{Best Fit} has an evident advantage over \algo{First Fit}, and \algo{First Fit} is better than \algo{Next Fit}. These results are in contrast with competitive results (recall that \algo{Best Fit} is not competitive at all) and indicate that the worst-case behaviour and average-case behaviour of these algorithms are quite different.

In this section, we introduce and evaluate the Move-To-Front (\MTF) algorithm for the \prob. We prove that, unlike \algo{Best Fit}, this algorithm is competitive (i.e., provides a worst-case guarantee). Our experiments indicate that \MTF performs better than all other algorithms on randomly generated sequences. \MTF is a simple Any Fit algorithm and runs as fast as BF and FF. Hence, we believe that this is the best algorithm for the \prob. 

\begin{myquote}
\emph{\algo{Move To Front:}} The algorithm maintains a list of open bins. When a new item $x$ arrives, the algorithms checks the bins one by one, starting from the front of the list, until it finds a bin that has enough space for $x$. If no bin has enough space, a new bin is opened for $x$. After placing $x$ in a bin, that bin is moved to the front of the list.
\end{myquote}

\MTF tends to place items which arrive almost at the same time in the same bins. By avoiding placing old an new items in the same bins, the algorithm avoids bad situation in which a bin is open for a small item $x$, and just before $x$ departs another small item $y$ is placed in the bin. %In other words, the algorithm avoid placing items in the old bins with the hopes that those bins get released 

\begin{theorem}
\algo{Move To Front} has a competitive ratio of at most $6 \mu + 7$.
\end{theorem}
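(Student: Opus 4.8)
The plan is to bound the cost of \MTF by relating its structure to that of \algo{Next Fit}, using the same decomposition of each bin's stretch into a first period (before the bin is ``abandoned'' in a suitable sense) and a second period. The key quantity is the number of bins opened by \MTF. Since \MTF is an Any Fit algorithm, whenever it opens a new bin for an item $x$, every currently open bin must be too full to hold $x$. The hard part is that, unlike \algo{Next Fit}, \MTF keeps many bins open simultaneously, so I cannot directly say that first periods are disjoint in time. The move-to-front rule is what I would exploit to recover structure: because a bin is moved to the front exactly when an item is placed in it, the bins near the back of the list are precisely those that have not received an item recently, and a bin only stays open (paying rent) until its last-departing item leaves.

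First I would classify the bins. For a bin $B$, consider the moment of its last insertion (the last time an item is placed into $B$, after which $B$ receives no further items before being released). I would split the rental interval of $B$ into the part up to its last insertion and the part after. The second part has length at most $\mu\Deltaa$, since after the last insertion the only items in $B$ are those already present, and each has length at most $\mu\Deltaa$. The total over all bins of these second parts is at most $m \cdot \mu\Deltaa$, where $m$ is the number of bins \MTF opens, so the whole argument reduces to bounding $m$ and bounding the sum of the first parts.

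Next I would bound $m$ by a charging argument against $\OPT$. I would argue that at any insertion causing a new bin to open, the open bins are ``well filled'' on average, so that the total size $\omega(\sigma)$ forces $m = \oh{\omega(\sigma)}$; combined with $\omega(\sigma) \le \OPT(\sigma)/\Deltaa$ from Proposition \ref{prp1} (using that every item has length at least $\Deltaa$), this yields $m = \oh{\OPT(\sigma)/\Deltaa}$, hence $m \cdot \mu\Deltaa = \oh{\mu\,\OPT(\sigma)}$. For the first parts, I would show that at any time the number of bins that are within their first part is bounded, or more directly that the sum of first-part lengths is controlled by $\span(\sigma)$ together with an $\oh{\mu}$ slack per bin; combining with the span lower bound $\span(\sigma) \le \OPT(\sigma)$ gives the remaining contribution. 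Putting the two estimates together and absorbing constants produces the claimed $6\mu + 7$; the constants come from how tightly the Any Fit and move-to-front properties bound the simultaneous overlap of first-period bins.

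I expect the main obstacle to be precisely this overlap bound: showing that the move-to-front discipline prevents too many bins from being in their active (first-period) phase at once, or equivalently that an item never lands in a bin that was opened ``long ago.'' The intuition stated in the paper --- that \MTF groups temporally close items and thereby avoids the pathological situation of a fresh item extending a nearly-departed bin --- must be made quantitative. I would try to prove that if an item is placed in a bin $B$ whose most recent prior insertion was at time $t'$, then all bins ahead of $B$ in the list received an item after $t'$ and are too full to take the new item, which lets me charge the ``stale'' rental of $B$ against the utilization of those fuller bins. Turning this local charging into the global factor $6\mu+7$ is where the careful bookkeeping lives, but it should follow the template of Theorem \ref{NFTH} with an extra constant factor to account for the non-disjointness of the first periods.
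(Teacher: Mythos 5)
Your proposal assembles the right raw ingredients --- that a bin which receives no insertion for longer than $\mu\Deltaa$ is released, that the bins ahead of $B$ in the list are exactly those that received items after $B$'s last insertion and are too full for the new item, and that \OPT is lower-bounded by utilization and span (Proposition~\ref{prp1}) --- but it stops exactly at the step that carries the whole proof, and the fallback you offer for that step does not work. You say the sum of first-part lengths should be ``controlled by $span(\sigma)$ together with an $O(\mu)$ slack per bin,'' following ``the template of Theorem~\ref{NFTH} with an extra constant factor.'' That template is unavailable here: in \NF the first periods are disjoint by construction (there is only one open bin), whereas under \MTF arbitrarily many bins can be in their active phase simultaneously, each refreshed by a new item every $\leq\mu\Deltaa$. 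In that regime the sum of first-part lengths is $\Theta(k\cdot span(\sigma))$ for $k$ simultaneously active bins, which no per-bin additive slack can absorb; the only way to salvage the bound is to show that \emph{\OPT itself} must pay proportionally to $k$, which is precisely the claim your sketch does not establish. Your per-insertion charging (charge the stale gap of $B$ to the utilization of the fuller bins ahead of it) also has no mechanism preventing double-counting: the gaps of different bins overlap in time, so the same \OPT utilization would be charged once for every bin behind it in the list, an unbounded multiplicity.

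The paper's proof supplies exactly the device you are missing: it partitions the span into periods of fixed length $(\mu+1)\Deltaa$, splits each bin's stretch into a head, body, and tail relative to these periods (heads and tails cost at most $2(\mu+1)\Deltaa$ per bin, handled by the bin-count bound $m \leq 2\opt(\sigma)/\Deltaa+1$), and then works period by period. Within one period $P$ starting at $t$, every bin whose body spans $P$ must receive an insertion during $[t+\Deltaa, t+(\mu+1)\Deltaa)$ (else it would be released), and each such bin $B$, except the last in list order, is paired with its successor $B'$: the items in $B$ when $B'$ first receives an item, plus that item, form a ``critical set'' of total size exceeding $1$, yielding utilization more than $\Deltaa$ inside the period. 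Since each item lies in at most two critical sets, \OPT pays at least $(\alpha(P)-1)\Deltaa/2$ in $P$. The fixed-length periods are what make the charges disjoint in time (so that $\sum_i \opt(P_i) \leq \opt(\sigma)$), and the consecutive-pairs construction is what bounds the charging multiplicity by $2$ --- the two forms of disjointness your plan needs and lacks. Without an equivalent of this per-period critical-set argument, your outline cannot be completed into a constant-times-$\mu$ bound.
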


\begin{proof}
Consider the final packing of the algorithm for a sequence $\sigma$. We assume that the sequence is \emph{continuous} in the sense that it has a continuous span, i.e., at each time there is at least one active item and any algorithm maintains at least one open bin. For sequences which are not continuous, at some point, all bins of \MTF and \OPT are closed. In this case, we can divide the sequence into continuous subsequences and apply the same argument for each of them.

We divide the span of the sequence into \emph{periods} of length $(\mu+1) \Deltaa$ (except the last period which might be shorter). For each bin $B$, we define \emph{head}, \emph{tail}, and \emph{body} of $B$ as follows.
If $B$ is opened and closed in the same period, its head is its stretch (interval between its opening and closing) while its body and tail are empty intervals. Otherwise, head of $B$ is the interval between when $B$ and the end of the period in which it is opened. Similarly, tail of $B$ is the interval between the start of the period in which it is closed and when $B$ is closed. Body of $B$ is the interval between its head and tail. Figure \ref{fig:tail} provides an illustration. Let $head(B)$, $body(B)$, and $tail(B)$ indicate the lengths of respectively head, body, and tail of $B$. For the cost that \MTF incurs for $B$ (stretch of $B$) we have: 
$$ stretch(B) = head(B) + body(B) + tail(B) \leq 2 (\mu+1) \Deltaa + body(B)$$
Assume there are $m$ bins opened by \MTF. The algorithm incurs a cost of at most $2\times (\mu+1)$ for head and tail of each bin. We will have:
$$ \MTF(\sigma) \leq 2m (\mu+1) \Deltaa + \sum_{b=1}^m body(B_b)$$
Assume there are $q$ periods in the packing. For each period $P_i$ ($1\leq i \leq q$), let $\alpha(P_i)$ denote the number of bins which have their body in $P_i$, that is, bins which are open at the beginning of the period and remain open till the end. Note that $\alpha(P_i) \geq 1$.
\MTF incurs a cost of $\alpha(P) \times (\mu+1)$ for body of all bins in $P$. We will have: 

\begin{equation}
\MTF(\sigma) \leq 2m (\mu+1) \Deltaa + (\mu+1) \Deltaa \sum_{i=1}^q \alpha(P_i) \label{equik}
\end{equation} 

Next, we consider the cost of \opt for packing $\sigma$. We prove the following claims:\\ \ \\
\emph{Claim 1:} For the number of bins opened by \MTF we have $m \leq 2 \opt(\sigma)/\Deltaa+1$.\\
\emph{Claim 2:} For each period $P$, if $\alpha(P) = 1 $, \OPT incurs a cost of at least $(\mu+1) \Deltaa$.\\
\emph{Claim 3:} For each period $P$, if $\alpha(P) \geq 2$, \OPT incurs a cost of at least $(\alpha(P)-1) \Deltaa /2$.\\

Claim 1 implies that the first term in Equation \ref{equik} is upper bounded by $4(\mu+1) \opt(\sigma)+2(\mu+1)\Deltaa$. Claims 2 implies that in the specified periods, \MTF and \OPT incur the same costs. Claim 3 implies that $\alpha \leq 2 \opt(P)/\Deltaa  + 1$ where $\opt(P)$ is the cost inured by $\opt$ in period $P$. Consequently, the second term in Equation \ref{equik} is upper bounded by 
\begin{align*}
(\mu+1) \Deltaa \sum_{i=1}^q (2 \opt(P_i)/\Deltaa  + 1) & = 2(\mu+1) \sum_{i=1}^q \opt(P_i)  + q (\mu+1) \Deltaa \\ 
 & < 2(\mu+1) \opt(\sigma) + span(\sigma) + (\mu+1) \Deltaa
\end{align*}
%
% and consequently 
%the cost of \MTF for each period $P$ is bounded by 
%$(\mu+1) \Deltaa \alpha(P)/\alpha(P)-1) \opt(P)$, 
%$2(\mu+1) \times \sum (\alpha(P)/\alpha(P)-1) \opt(P)$, 
%
%where $\opt(P)$ is the cost inured by $\opt$ in period $P$.  This is no more than 
%$ (\mu+1) \Deltaa (2 \opt(P)+\Deltaa)$
%
%$4(\mu+1) \opt(P)$. 
The last inequality holds because we have divided the stretch of $\sigma$ into $q$ periods with equal length of $(\mu+1)\Delta$ (except the last period which might be shorter). Adding both terms in Equation \ref{equik}, we get:
\begin{align*} 
\MTF(\sigma) & \leq 4 (\mu+1) \opt(\sigma) + 2 (\mu+1) \Deltaa + 2 (\mu+1) \opt(\sigma) + \opt(\sigma) + (\mu+1) \Deltaa  \\ & =
 (6 \mu+7) \opt(\sigma)+3(\mu+1)\Deltaa
\end{align*}

%$$ \MTF(\sigma) \leq 4 (\mu+1) \opt(\sigma) + 2 (\mu+1) \Deltaa + 4 (\mu+1) \sum_{i=1}^q \opt(P_i) \\ = 8 (\mu+1) \opt(\sigma)+2\mu+1$$

This implies that the competitive ratio of the algorithm is at most $6 \mu+7$ (note that $3(\mu+1)\Deltaa$ is a constant). 
To complete the proof, it remains to show the above claims hold.

For Claim 1, consider the ordering in which bins are opened and define the amortized weight of a bin (except the last bin in the ordering) as the total size of items in the bin plus the item that causes opening of the next bin. With this definition, every bin (except the last bin) has amortized weight of more than 1. Thus, the total amortized cost of all bins will be more than $m-1$. Each item is counted at most twice in the total amortized cost (once as the item that opens a new bin and once as the member of a bin which cannot fit a new item). Assuming $\omega(\sigma)$ is the total size of items, we will have $m-1 < 2\omega(\sigma)$. % which implies $m < 2\omega(\sigma)+1$. 
Note that $\omega(\sigma) \leq util(\sigma)/\Delta \leq \opt(\sigma)/\Delta$. Hence, we get $m \leq 2\opt(\sigma)/\Delta + 1$.

For Claim 2, note that at each time, \opt maintains at least one open bin; otherwise, the sequence is not continuous. 

For Claim 3, let $t$ denote the start time of $P$ and let $B^*$ denote the set of the $\alpha(P)$ bins which have their body stretched along $P$. Consider the time interval $[t+\Delta,t+(\mu+1) \Delta)$. In this interval, any of the bins in $B^*$ receive at least one new item; otherwise, the algorithm would have closed the bin (recall that the length of any item is at most $\mu \Delta$). 
For each bin $B$ in $B^*$, except the last  bin in the list maintained by the algorithm right before time $t+\Delta$, let $t_B$ indicate the time that the bin $B'$ receives an item for the first time (in the interval $[t+\Delta,t+(\mu+1)\Delta)$). Here, $B'$ is the bin that is placed right after $B$ in the mentioned ordering. Define the \emph{critical set} of $B$ as the set of items in $B$ at time $t_B$ plus the item that was placed in $B'$. Note that the total size of items in the critical set of each bin is more than 1.  Hence, the critical items of each bin have a resource utilization of more than $\Delta$ in the interval $[t,t+(\mu+1)\Delta)$ . Since each item belongs to critical sets of at most two bins, the total resource utilization of critical items is at least $(\alpha(P)-1) \Delta/2$ in the interval $[t,t+1+\mu)$. Note that the resource utilization is a lower bound for the cost of \opt in the same interval. \qed

 %The proof of these claims can be found in the Appendix.
\end{proof}

\begin{figure}[!t]
\centering
\includegraphics[width=0.6\columnwidth, trim = 145mm 0mm 0mm 267mm, clip]{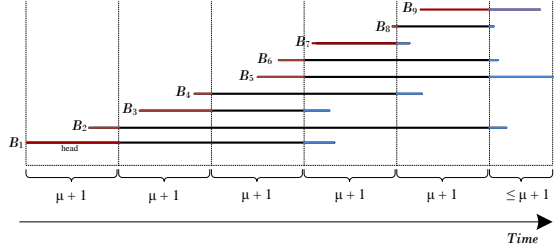}
\caption{The span of an input sequence is divided into periods. The red and blue intervals respectively indicate heads and tails of bins while black intervals are the bodies of the bins.}
\label{fig:tail}
\end{figure}

In fact, the above proof can be extended to any algorithm that maintains a list of bin and places an incoming item in the first bin which has enough space. Such an algorithm might update the list after \emph{placing} items (as \MTF does). In particular, the above analysis also applies for the \algo{First Fit} algorithm. For the \algo{Best Fit} algorithm, the above analysis fails because the order of bins changes when items \emph{depart}. Recall that \algo{Best Fit} is not competitive at all. The bad sequences that cause unbounded competitive ratio of \algo{Best Fit} cause the algorithm keep multiple bins open for an arbitrary long time with an arbitrary small level of $\epsilon$. This cannot happen for \MTF and \algo{First Fit}.

\section{Average-case Analysis: An Experimental Study}\label{sect:exp}
In this section, we study the performance of \MTF compared to other algorithms for the \prob on randomly-generated sequences. We discretize the problem by assuming that the size of bins is an integer $E$ and items have integer sizes in the range $[1,E]$. Moreover, we assume items arrive in discrete time-steps in the range $[1,T-\mu]$ and their length is in the interval $[1,\mu]$. Here, $T$ denotes the span of generated sequences. 
We examine different values of $\mu$ and $T$ for sequences of fixed length. This way, $T$ is a measure of \emph{sparsness} and defines the rate at whcih the items arrive.
%In our experiments, we fix the value of $E$ to $10^3$ and assume sequences have size $n=10^5$.
Table \ref{tt1} gives details of the datasets that we generated for our experiments. In all cases, both size and length of items are randomly and independently taken from the indicated ranges (assuming a uniform distribution). For each setting of the problem, we run different algorithms on $10^3$ randomly generated sequences. For each sequence, we compute the resource utilization of the sequence as a lower bound for the cost of \opt (see Proposition \ref{prp1}). We use the ratio between the cost of an algorithm and the resource utilization as a measure of performance. 

\begin{table}[!b]
\begin{center}
\scalebox{.9}{
\begin{tabular}{| c | c | c | c |} 
\hline
 Parameter & Description & Value  & Note \\ \hline
 \hline
  $n$ & length of sequences & $10^5$ &  Number of items to be packed \\ \hline
  $\mu$ & maximum length of items & 1,2,5,10,100 & Lengths are picked from the range $[1,\mu]$ \\ \hline
  $T$ & span of sequence & $10^3, 10^4, 10^5$ & Arrival times are picked from the range $[1,T-\mu]$ \\ \hline
  $E$ & bin capacity & $10^3$ & Sizes are picked from the range $[1,E]$\\ \hline
\end{tabular}}
  \caption{A summary of the experimental settings.}\label{tt1}
  \end{center}
\end{table}

The algorithms that we considered in the experiments are \algo{Next Fit}, \algo{Modified Next Fit}, \algo{First Fit}, \algo{Modified First Fit}, \algo{Harmonic}, \algo{Best Fit}, and \algo{Move To Front}. We define the parameters of \algo{Modified Next Fit} and \algo{Modified First Fit} to be respectively $E/(\mu+1)$ and $E/(\mu+7)$. These values ensure that these algorithms achieve their best possible competitive ratio (see Section \ref{tward} and \cite{LiTang14}). Note that the value of $\mu$ is not known to the online algorithm and these algorithms are semi-online in this sense. 
We also consider the \algo{Harmonic} algorithm which classify items by their sizes (using harmonic intervals) into $K$ classes and applies the \algo{Next Fit} strategy for placing items of each class; we assume $K=10$ in our experiments. A straightforward analysis shows that the competitive ratio of the \algo{Harmonic} algorithm is as good as \algo{Modified Next Fit} (with the same parameter $K$). However, similar to the bin packing problem, for the \prob, \algo{Harmonic} seems to have a poor average-case performance. 

Figure \ref{fig:res} shows the average-case performance ratio among all sequences for different algorithms. In most cases, \algo{Move To Front} is the best algorithm. Intuitively, there are two factors which define the quality of a packing. One is how well items are \emph{aligned} to each other. Informally speaking, a packing is well-aligned if items that arrive at the same time are placed together; this ensures that, on expectation, all items of a bin depart also at (almost) the same time. Thus, there is a more chance of saving cost through closing bins. 
 The second factor in defining the quality of a bin is how well the items are packed together according their sizes. Clearly, if items are tightly packed together, there is a save in cost by avoiding opening new bins.  

\begin{figure}[!b]
\begin{center}
\includegraphics[width=\columnwidth]{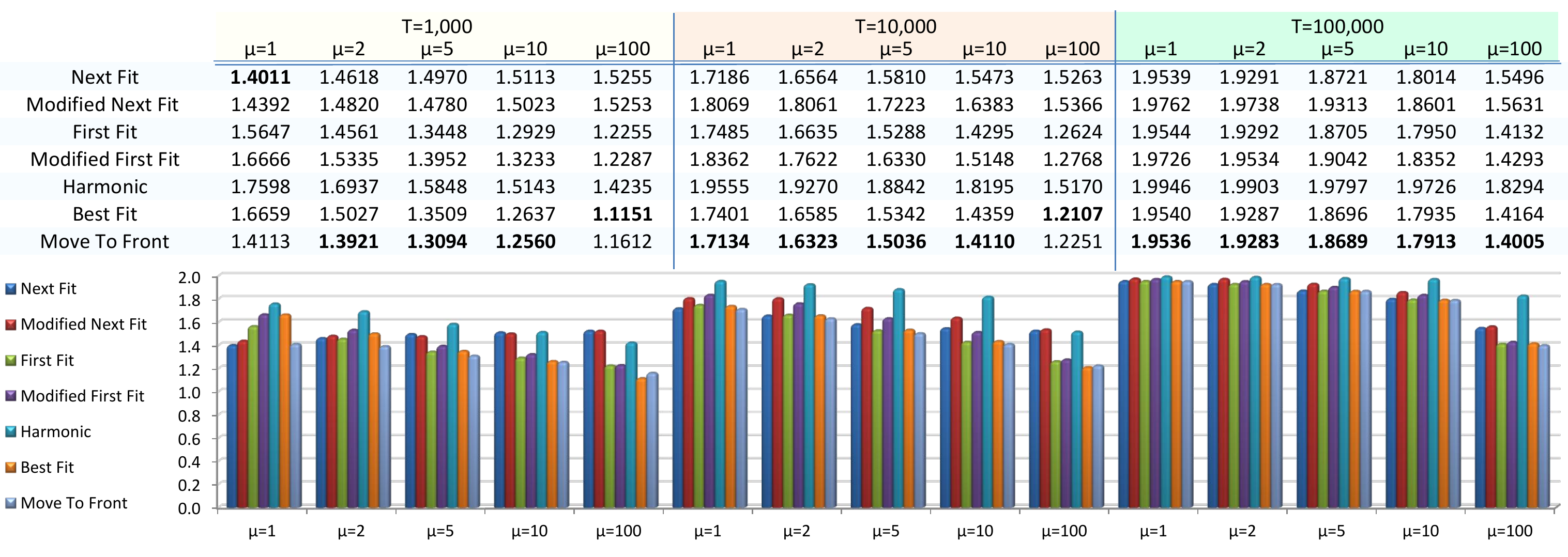}
\caption{Average-case performance ratio of major \prob algorithms, assuming a uniform distribution for the length and size of items. The bold numbers indicate the best algorithm for different values of $T$ and $\mu$. In all cases \MTF is the best or second to the best algorithm. To make comparison easier, the numbers are plotted into a bar diagram.}
\label{fig:res}
\end{center}
\end{figure}

By definition, \algo{Next Fit} results in well-aligned packings; however, it does not packs items as tightly as Any Fit algorithms do. On the other side, \algo{Best Fit} results tight packings which are not necessarily well-aligned. \algo{Move To Front} provides a compromise. 
The packing of \MTF are well-aligned because items placed in the most recent bin have almost same arrival time; meanwhile avoiding to place items in other bins give the a chance of being closed (and saving cost thoroughly). At the same time, as an Any Fit algorithm, \MTF places items almost tightly and does not open a large number of bins as \algo{Next Fit} does. For smaller values of $\mu$, it is more important to achieve well-aligned packings. this is because, when items have roughly same length, there is more benefit in placing them together according to their arrival time since it ensures that they depart at almost same time and their hosting bin gets released. This is particularly more evident for sequences with small span ($T=1,000$). In these sequences, many items appear at the same time and almost all algorithms result in relatively good packing (regarding item sizes).  As a results, for smaller values of $\mu$ and $T$, \algo{Next Fit} performs better relative to other algorithms. In particular, when $\mu=1$ and $T=1,000$, it slightly outperforms \MTF. 
For larger values of $\mu$, it is more important to avoid opening new bins; this is because each bin remains open for a relativity long period of time and one should avoid opening a new bin as much as possible. As a result, when $\mu$ is large ($\mu = 100$), \algo{Best Fit} outperforms \algo{Move To Front}. 

\section{Concluding Remarks}
In this paper, we showed that the \algo{Next Fit} algorithm provides promising worst case guarantees for the \prob. We expect that the same holds for other bounded-space algorithm, e.g., \algo{Harmonic} or \algo{BBF$_2$} of \cite{CsiJoh91}. Unfortunately, these algorithms do not have good average-case performance. To address this issue, we introduced the \MTF algorithm which is a simple and fast Any Fit algorithm which can be regarded as an alternative to the \algo{Best Fit} and \algo{First Fit} algorithms. % and is as simple and quick as \algo{Best Fit} and Next Fit algorithms. 
Our experiments indicate that \MTF outperforms other algorithms on the average case for placing sequences that are generated randomly. The closest counterpart of \MTF (regarding average case performance) is the \algo{Best Fit} algorithm which is not good in the worst case as it is not competitive at all. In contrast to \algo{Best Fit}, we proved that \MTF is competitive and has a competitive ratio of at most $6 \mu +7$. We believe this upper bound is not tight and the competitive ratio of \MTF can be improved to $2\mu + 1$ of \NF; we leave this as a future work. Another promising direction for future work is to to provide theoretical upper bounds for the average-case performance of \MTF on sequences that follow arbitrary distributions.

\bibliographystyle{splncs03}
\bibliography{refs2,online,confshort}

%\newpage

%\input{appendix}

\end{document}